\newtheorem{theo}{Theorem}[section]
\newtheorem{cor}[theo]{Corollary}
\newtheorem{prop}[theo]{Proposition}
\newtheorem{uppg}[theo]{Exercise}
\newtheorem{remark}[theo]{Remark}
\newtheorem{remarks}[theo]{Remarks}
\newtheorem{definition}[theo]{Definition}
\newcommand{\be}{\begin{eqnarray*}}
	\newcommand{\ee}{\end{eqnarray*}}
\newcommand{\ben}{\begin{eqnarray}}
\newcommand{\een}{\end{eqnarray}}
\def\subsecn (#1) {\medskip\ \ \ {\it #1}\medskip}
\newcommand{\lp}[1]{\left(\begin{array}{#1}}
	\newcommand{\rp}{\end{array}\right)}
\newcommand{\leftd}[1]{\left\{\begin{array}{#1}}
	\newcommand{\rightd}{\end{array}\right.}
\def\B {\mathbf{B}}
\def\K {\mathbf{K}}
\def\P {\mathbf{P}}
\def\Q {\mathbf{Q}}
\def\R {\mathbf{R}}
\def\U {\mathbf{U}}
\def\b {\boldsymbol{b}}
\def\f {\boldsymbol{f}}
\def\k {\boldsymbol{k}}
\def\p {\boldsymbol{p}}
\def\s {\boldsymbol{s}}
\def\t {\boldsymbol{t}}
\def\u {\boldsymbol{u}}
\def\w {\boldsymbol{w}}
\def\z {\boldsymbol{z}}
\def\Ec {\mathcal{E}}
\def\Ic {\mathcal{I}}
\def\Wc {\mathcal{W}}
\def\Rb {\mathbb{R}}
\begin{document}

    \renewcommand{\thefootnote}{\arabic{footnote}}

    \begin{center}
        {\Large \textbf{Interpolation and linear prediction of data - three kernel selection criteria}} \\[0pt]
        ~\\[0pt]
        Azzouz Dermoune\footnote{Laboratoire Paul Painlev\'e,
            USTL-UMR-CNRS 8524. UFR de Math\'ematiques, B\^at. M2. 59655
            Villeneuve d'Ascq C\'edex, France. Email:
            \texttt{azzouz.dermoune@univ-lille1.fr}}, 
             Mohammed Es.Sebaiy\footnote{National School of Applied Sciences-Marrakech, Cadi Ayyad University, Marrakesh, Morocco. E-mail:
            \texttt{mohammedsebaiy@gmail.com}}
            and Jabrane Moustaaid\footnote{National School of Applied Sciences-Marrakech, Cadi Ayyad University, Marrakesh, Morocco.
            E-mail: \texttt{jabrane.mst@gmail.com}}%
        \\[0pt]
        \textit{  Lille University and Cadi Ayyad University }\\[0pt]
        ~\\[0pt]
    \end{center}

    \begin{abstract}
    	Interpolation and prediction have been useful approaches in modeling data in many areas of applications. The aim of this paper is the prediction of the next value of a time series (time series forecasting) using the techniques in interpolation of the spatial data, for the two approaches kernel interpolation and kriging. We are interested in finding some sufficient conditions for the kernels and provide a detailed analyse of the prediction using kernel interpolation. Finally, we provide a natural idea to	select a good kernel among a given family of kernels using only the data. We illustrate our results by application to the data set on the mean annual temperature of France and Morocco recorded for a period of 115 years (1901 to 2015).
    \end{abstract}

   \noindent {\bf Keyword:} Kernel interpolation, stochastic interpolation, linear algebra interpolation, cubic spline interpolation, climate change detection.

\section{Introduction}
Interpolation and prediction have been useful approaches in modelling
data in many areas of applications such as the prediction of the
meteorological variables,  surface reconstruction and Interpolation of spatial data \cite{Scheuerer} among many more. For more details see \cite{Chiles}, \cite{Chiles2},	 \cite{Wandland}  and \cite{MKK}.\\ 
In this work we extend the results of \textit{Scheuerer} \cite{Scheuerer} to the linear prediction approach  of time series. We also cite the work of \textit{Dermoune et all} \cite{DP1} where the parametrizations and the cubic spline were used as a model of prediction and we extend this results to the kernel interpolation framework.

Interpolation of spatial data is a very general mathematical problem and it's precise mathematical formulation as defined in \cite{Scheuerer} is to reconstruct a function $f:T\to \Rb$ with $T$ is	is a domain in $\Rb^d$, based on its values at a finite set of data points $X=\{x_1,\hdots, x_n\}\subset T$, the values $f(x_1),\ldots,f(x_n)$ assumed to be known. But, in our case we are interested in the time series forecasting problem we have $T=\{x_1,\hdots, x_n, x_{n+1}\}$ represent the time and the time series is $f(x_1),\ldots,f(x_n)$ with the unknown value is  $f(x_{n+1})$. In other words,	we want to predict effectively the value	$f(x_{n+1})$ using the known values $f(x_1)$, $\hdots$, $f(x_n)$.
From \cite{Scheuerer} we have that both approaches kernel interpolation and kriging  have the same approximant for the interpolation of spatial data problem, even with the different model assumption, a general overview in both approaches can be fond in \cite{Berlinet}.

\section{Linear prediction and kernel interpolation}
Let $\Rb^{\{x_1,\hdots, x_{n+1}\}}$ be the Hilbert space of real functions on $\{x_1,\hdots, x_{n+1}\}$ with inner product $(.,.)$ and  norm $N(.)$. The dual of $\Rb^{\{x_1,\hdots, x_{n+1}\}}$ is spanned by the point evaluation linear
forms $\delta_x: f\to f(x)$, $x \in \{x_1,\dots,x_{n+1}\}$, that is
\be
(\Rb^{\{x_1,\hdots,x_{n+1}\}})^*=\span(\delta_{x_1},
\hdots,\delta_{x_{n+1}}). \ee
Moreover, the dual norm $N^*$ is defined by 
\be
(N^*(\mu))^2=\sup\{|\mu(f)|^2:\quad N(f) \leq 1\}, 
\ee 
for all $\mu\in (\Rb^{\{x_1,\hdots,x_{n+1}\}})^*$.
\\
Now, for any function $f\in\Rb^{\{x_1,\hdots, x_{n+1}\}}$ and any sequence of real numbers $(w_1,\hdots,w_n)$, we define the \textbf{linear prediction} of 
$f(x_{n+1})$
\be 
\hat{f}(x_{n+1})=\sum_{i=1}^nw_if(x_i), 
\ee
with the error  
\be
\Ec rr_n(f):=\vert f(x_{n+1})-\sum_{i=1}^nw_if(x_i)\vert, 
\ee
and the worst error in the unit ball w.r.t. the norm $N(.)$ 
\ben
\label{weub}
\Wc err(f):=\sup\{|f(x_{n+1})-\sum_{i=1}^nw_if(x_i)|^2:\quad N(f)\leq
1\}=(N^*(\delta_{x_{n+1}}-\sum_{i=1}^nw_i\delta_{x_i}))^2. 
\een
\par
In the rest oh this paper, we endow the vector space $\Rb^{\{x_1,\hdots, x_{n+1}\}}$ with the scalar inner product
\be
(f,f)=(f,f)_{\K^{-1}}&=&\sum_{i=1}^{n+1}\sum_{j=1}^{n+1}f(x_i)f(x_j)k^{(-1)}(x_i,x_j)\\
&=&\f^\top\K^{-1}\f, 
\ee
with $\f=(f(x_1),\hdots, f(x_{n+1}))^\top$ and  $\K=[k(x_i,x_j):\quad i,j=1,\hdots, n+1]$ is a fixed $(n+1)\times (n+1)$ symmetric positive definite matrix, with $k^{(-1)}(x_i,x_j)$ denotes the $(i,j)$ entry of $\K^{-1}$. The norm defined by $\K$ is given by $N(f)=\|\K^{-1/2}\f\|$, with $\|\cdot\|$ denotes the Euclidean norm.
\subsection{Min-max prediction and kernel interpolation}
\begin{definition}[Min-max prediction]
	A linear prediction $ f^*(x_{n+1}) $ of  $ f(x_{n+1}) $ is called min-max if
	\ben 
	f^*(x_{n+1})=\sum_{i=1}^nw_i^*f(x_i),\label{w*prediction} 
	\een
	where $ \left( w_1^*,...,w_n^*\right)  $ are given by the minimization of the $\Wc err(f)$ \ref{weub}.
\end{definition}
The following result give us the optimal weights associate to the min-max prediction w.r.t. to the norm $\|\K^{-1/2}\cdot\|$.
\begin{prop}
	The the worst error in the unit ball, $\Wc err(f)$,  w.r.t. to the norm $\|\K^{-1/2}\cdot\|$ is equals
	\ben
	\Wc err(f)= \|\delta_{x_{n+1}}-\sum_{i=1}^nw_i\delta_{x_i})\|_{\K^{1/2}}^2 \label{Kundemibasis}
	\een
	where $\|\cdot\|_{\K^{1/2}}$ denotes the dual norm defined by
	the dual scalar inner product \be
	(\delta_{x_i},\delta_{x_j})_{\K}=k(x_i,x_j), \quad i,j=1,\hdots,
	n+1. \ee 
\end{prop}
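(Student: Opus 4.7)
The plan is to unfold the definition of $\Wc err$ and the supremum defining $N^*$, then reduce the problem to a quadratic optimization on $\Rb^{n+1}$ that is handled by Cauchy--Schwarz. Write $\mu := \delta_{x_{n+1}}-\sum_{i=1}^n w_i \delta_{x_i}$ with coefficient vector $\c=(-w_1,\ldots,-w_n,1)^\top$, so that $\mu(f)=\c^\top\f$ for every $f\in\Rb^{\{x_1,\ldots,x_{n+1}\}}$. By \eqref{weub} we have $\Wc err(f)=(N^*(\mu))^2=\sup\{|\c^\top\f|^2:\f^\top\K^{-1}\f\leq 1\}$.

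Next I would perform the substitution $\g=\K^{-1/2}\f$, which is a bijection of $\Rb^{n+1}$ onto itself since $\K$ is symmetric positive definite. The constraint becomes $\|\g\|\leq 1$, and the functional transforms as $\c^\top\f=\c^\top\K^{1/2}\g$. The supremum $\sup\{|\c^\top\K^{1/2}\g|^2:\|\g\|\leq 1\}$ is the squared Euclidean norm of the vector $\K^{1/2}\c$ by Cauchy--Schwarz, with equality attained at $\g=\K^{1/2}\c/\|\K^{1/2}\c\|$. Thus
\[
\Wc err(f)=\|\K^{1/2}\c\|^{2}=\c^\top\K\c=\sum_{i,j=1}^{n+1}c_i c_j k(x_i,x_j).
\]

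Finally I would translate this expression back into the dual inner product. By the definition $(\delta_{x_i},\delta_{x_j})_{\K}=k(x_i,x_j)$ extended bilinearly to the dual space, the right-hand side is exactly $\bigl(\sum_i c_i\delta_{x_i},\sum_j c_j\delta_{x_j}\bigr)_{\K}=\|\mu\|_{\K^{1/2}}^2$. Inserting the explicit form of $\mu$ yields \eqref{Kundemibasis}.

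I do not expect a real obstacle; the only point that deserves a line of justification is that the bilinear form $(\cdot,\cdot)_{\K}$ on the dual space is well-defined and positive definite, so that $\|\cdot\|_{\K^{1/2}}$ is indeed a norm and is the dual of $N(\cdot)=\|\K^{-1/2}\cdot\|$. This is immediate from $\K$ being symmetric positive definite, since the Gram matrix of the dual basis $(\delta_{x_1},\ldots,\delta_{x_{n+1}})$ is $\K$ itself and the Gram matrix of the primal norm is $\K^{-1}$; the two are mutually inverse, which is precisely the algebraic content underlying the computation above.
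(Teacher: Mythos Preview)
Your proof is correct and follows essentially the same route as the paper: both perform the change of variables $\g=\K^{-1/2}\f$ to reduce the constrained supremum to a Euclidean unit-ball problem for the linear functional $\g\mapsto (\K^{1/2}\c)^\top\g$, and then identify the result as $\c^\top\K\c$. The only cosmetic difference is that the paper phrases the last step as ``the largest eigenvalue of the rank-one matrix $\K^{1/2}\c\,\c^\top\K^{1/2}$'' whereas you invoke Cauchy--Schwarz directly; these are equivalent one-line arguments.
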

\begin{proof}
	%The equality \ref{Kundemibasis} is  a consequence of 
	From the general theory of reproducing kernel Hilbert spaces,see \cite{Berlinet,Scheuerer}, we have 
	\ben
	&&\sup_{\|\K^{-1/2}\f\|\leq 1}\{|f(x_{n+1})-\sum_{i=1}^nw_if(x_i)|^2\}\nonumber\\
	&&=\sup_{\|\K^{-1/2}\f\|\leq 1}\{[\K^{-1/2}\f]^\top[\K^{1/2}(-w_1,\hdots,-w_n,1)^\top(-w_1,\hdots,-w_n,1)\K^{1/2}][\K^{-1/2}\f]\}\nonumber\\
	&&=\mbox{the largest eigenvalue of $[\K^{1/2}(-w_1,\hdots,-w_n,1)^\top(-w_1,\hdots,-w_n,1)\K^{1/2}]$}\nonumber\\
	&&=\|\K^{1/2}(-w_1,\hdots,-w_n,1)^\top\|^2=(-w_1,\hdots,-w_n,1)\K(-w_1,\hdots,-w_n,1)^\top\nonumber\\
	&&=\|\delta_{x_{n+1}}-\sum_{i=1}^nw_i\delta_{x_i})\|_{\K^{1/2}}^2.\nonumber
	\een
\end{proof}
\begin{cor}
	The optimal weights of the min-max linear prediction of	$f(x_{n+1})$ are given by 
	\ben
	\w^*=(w_1^*,\hdots, w_n^*)=[k(x_{n+1},x_1), \hdots,
	k(x_{n+1},x_n)][k(x_i,x_j):\quad i,j=1,\hdots, n]^{-1}.
	\label{optimalweights} 
	\een
	\begin{proof}
		The optimal weights are given by the
		minimization 
		\ben
		\arg\min\{\|\delta_{x_{n+1}}-\sum_{i=1}^nw_i\delta_{x_i}\|^2_{\K^{1/2}}:\quad
		w_1,\hdots, w_n\in\Rb\}, \label{Koptimization} 
		\een 
		which is the solution of the system 
		\ben
		\sum_{j=1}^nw_jk(x_i,x_j)=k(x_{n+1},x_i),\quad i=1,\hdots, n,
		\label{optimalweightsystem} 
		\een
		it follows easily that $\w^*$ is given by \ref{optimalweights}.
	\end{proof}
\end{cor}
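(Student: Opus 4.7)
The plan is to expand the objective $\|\delta_{x_{n+1}}-\sum_{i=1}^n w_i\delta_{x_i}\|^2_{\K^{1/2}}$ supplied by the preceding proposition, and then to minimize the resulting quadratic form in $(w_1,\ldots,w_n)$ by a standard first order condition.

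First I would unfold the squared dual norm using bilinearity of the inner product $(\delta_{x_i},\delta_{x_j})_{\K}=k(x_i,x_j)$. This yields
\begin{equation*}
\Wc err(f)=k(x_{n+1},x_{n+1})-2\sum_{i=1}^n w_i\,k(x_{n+1},x_i)+\sum_{i=1}^n\sum_{j=1}^n w_i w_j\,k(x_i,x_j),
\end{equation*}
which is a strictly convex quadratic function of $\w=(w_1,\ldots,w_n)^\top$ because the submatrix $[k(x_i,x_j):i,j=1,\ldots,n]$ is positive definite (being a principal submatrix of the positive definite matrix $\K$). Strict convexity already guarantees a unique minimizer, so the first order condition is both necessary and sufficient.

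Next I would differentiate in each $w_i$ and set the gradient to zero; this gives exactly the linear system \eqref{optimalweightsystem}, i.e. $\sum_{j=1}^n w_j k(x_i,x_j)=k(x_{n+1},x_i)$ for $i=1,\ldots,n$. Inverting the kernel submatrix (which is allowed by the same positive definiteness argument) then produces the closed form $\w^*$ asserted in \eqref{optimalweights}.

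There is no real obstacle; the only point worth being careful about is justifying that $[k(x_i,x_j):i,j=1,\ldots,n]$ is invertible, which I would dispatch in one line by noting it is a principal submatrix of the positive definite matrix $\K$. Everything else is a textbook unconstrained quadratic minimization.
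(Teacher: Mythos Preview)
Your proposal is correct and follows essentially the same route as the paper's own proof: reduce to the minimization of $\|\delta_{x_{n+1}}-\sum_{i=1}^n w_i\delta_{x_i}\|^2_{\K^{1/2}}$, derive the normal equations \eqref{optimalweightsystem}, and invert the kernel submatrix. You simply add the (helpful) details that the paper omits, namely the explicit quadratic expansion, the strict convexity via the principal-submatrix argument, and the justification of invertibility.
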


\begin{remarks} 
	\begin{itemize}
		\item[1)] The worst case linear prediction error in the ball with the radius $r>0$
		w.r.t. to the norm $\|\K^{-1/2}\cdot\|$ is equal to
		\be
		&&\sup_{\|\K^{-1/2}\f\|\leq r}\{|f(x_{n+1})-\sum_{i=1}^nw_if(x_i)|^2\}\\
		&&=r^2(-w_1,\hdots,-w_n,1)\K(-w_1,\hdots,-w_n,1)^\top, 
		\ee
		as a result the optimal weights (\ref{optimalweights}) do not depend on the radius of the ball. 
		
		\item[2)] The prediction using the spline interpolating w.r.t. the norm $\|\K^{-1/2}\cdot\|$ (see, e.g., \cite{Berlinet}) defined by the minimizer : 
		\be
		\label{spint}
		S(f)=\arg\min\{\|\K^{-1/2}\f\|:\quad
		f(x_1),\hdots,f(x_n)\,\mbox{are fixed}\},
		\ee
		coincide with the prediction (\ref{w*prediction}).
		\item[3)] The min-max prediction (\ref{w*prediction}) is equal to
		\ben 
		f^*(x_{n+1})=\sum_{j=1}^n\alpha_j^* k(x_{n+1},x_j),
		\label{kernelinterpolation}
		\een 
		where 
		$ \alpha_j^*,j=1,...,n $ is the solution of the system 
		\ben 
		\sum_{j=1}^n\alpha_j^*\k(x_i,x_j)=f(x_i),\quad i=1,\hdots, n, 
		\label{alphainterpolation}
		\een 
	\end{itemize}
\end{remarks}
\par
Now, we turn to the interpolation of the function $f$ at the set $\{x_1,\hdots,x_{n}\}$ using $span(\k_1,\hdots,\k_n)$ where
$\k_j$ denotes the $j$-th column of the matrices $\K$. Then the interpolation of the function $f$  equals 
\be 
I(f)=\sum_{j=1}^n\alpha_j^*\k_j
\ee 
with the weights $\alpha^*$ are given by (\ref{alphainterpolation}). 
The following Proposition gives the error of interpolation.
\begin{prop}[Interpolation error ]\label{interpolationerror} 
	The error of interpolation, $ \Ic\Ec rr $, is given by 
	\ben
	\Ic\Ec rr(f):=f(x_{n+1})-f^*(x_{n+1})=[\k_{n+1}^{(-1)}\f][k(x_{n+1},x_{n+1})-\sum_{i=1}^nw_i^*k(x_i,x_{n+1})].
	\label{realerror}
	\een
	\begin{proof}
		First, observe that we can write the coordinates of $\f$ in the basis $\K$ as
		\be 
		\f=\sum_{j=1}^{n+1}[\k_j^{(-1)}\f]\k_j.
		\ee 
		with $\k_j^{(-1)}$ denotes the $j$-th row of $\K^{-1}$. Therefore
		\be
		I(f)&=&\sum_{j=1}^{n+1}[\k_j^{(-1)}\f]I(\k_j)\\
		&=&\sum_{j=1}^{n}[\k_j^{(-1)}\f]\k_j+[\k_{n+1}^{(-1)}\f]I(\k_{n+1}), \ee
		because the interpolation of $\k_{j}$ is exact for $j=1,\hdots, n$.
		Thus, 
		\be
		f-I(f)&=&[\K_{n+1}^{-1}\f](\k_{n+1}-I(\k_{n+1}))\\&=&[\k_{n+1}^{(-1)}\f](0,\hdots,0,[k(x_{n+1},x_{n+1})-\sum_{i=1}^nw_i^*k(x_i,x_{n+1})])^\top,
		\ee which completes the proof.
	\end{proof}
\end{prop}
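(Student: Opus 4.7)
The plan is to exploit the fact that the columns $\{\k_1,\ldots,\k_{n+1}\}$ of $\K$ form a basis of $\Rb^{\{x_1,\ldots,x_{n+1}\}}$ (since $\K$ is symmetric positive definite), expand $\f$ in this basis, apply linearity of the interpolation operator $I$, and cancel the first $n$ basis contributions using the fact that $I$ reproduces them exactly.

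First, I would write $\f=\K\,\K^{-1}\f$ coordinate by coordinate to obtain the expansion
$$\f=\sum_{j=1}^{n+1}[\k_j^{(-1)}\f]\,\k_j,$$
where $\k_j^{(-1)}$ is the $j$-th row of $\K^{-1}$, and then apply $I$ term by term:
$$I(f)=\sum_{j=1}^{n+1}[\k_j^{(-1)}\f]\,I(\k_j).$$
Next I would observe that for each $j\le n$ the vector $\k_j$ already belongs to $\mathrm{span}(\k_1,\ldots,\k_n)$ and matches its own values at $x_1,\ldots,x_n$; since the Gram submatrix $[k(x_i,x_j)]_{i,j=1,\ldots,n}$ is a principal submatrix of $\K$ it is invertible, so the system (\ref{alphainterpolation}) has a unique solution and therefore $I(\k_j)=\k_j$. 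Subtracting $I(f)$ from $\f$ collapses the sum to a single term,
$$\f-I(f)=[\k_{n+1}^{(-1)}\f]\,\bigl(\k_{n+1}-I(\k_{n+1})\bigr).$$

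Finally I would read off the $(n+1)$-th coordinate of both sides. The $(n+1)$-th entry of $\k_{n+1}$ is $k(x_{n+1},x_{n+1})$, and the $(n+1)$-th entry of $I(\k_{n+1})$ is the min-max prediction of $\k_{n+1}(x_{n+1})$ from the data $\k_{n+1}(x_1),\ldots,\k_{n+1}(x_n)$. By Corollary~\ref{Koptimization}/(\ref{optimalweights}) applied to the vector of data values $\bigl(k(x_1,x_{n+1}),\ldots,k(x_n,x_{n+1})\bigr)$, this predicted value equals $\sum_{i=1}^{n}w_i^*\,k(x_i,x_{n+1})$, which yields (\ref{realerror}).

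The only genuine step to be careful about is the identification $I(\k_{n+1})(x_{n+1})=\sum_{i=1}^{n}w_i^*k(x_i,x_{n+1})$: one must check that the weights $w_i^*$ from (\ref{optimalweights}), which were derived from minimising $\Wc err$, indeed coincide with those produced by the representation (\ref{kernelinterpolation})--(\ref{alphainterpolation}) when the latter is applied to the data $f=\k_{n+1}$. This is a quick linear-algebra reconciliation: both expressions reduce to the quadratic form $\bigl(k(x_{n+1},x_1),\ldots,k(x_{n+1},x_n)\bigr)\,[k(x_i,x_j)]_{i,j\le n}^{-1}\bigl(k(x_1,x_{n+1}),\ldots,k(x_n,x_{n+1})\bigr)^{\top}$. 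Everything else is bookkeeping with the basis expansion.
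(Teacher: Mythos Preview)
Your proof is correct and follows essentially the same route as the paper: expand $\f$ in the basis of columns of $\K$, use linearity of $I$ together with $I(\k_j)=\k_j$ for $j\le n$, and read off the $(n+1)$-th coordinate of $\f-I(f)$. Your extra care in justifying $I(\k_j)=\k_j$ via invertibility of the principal submatrix and in identifying $I(\k_{n+1})(x_{n+1})=\sum_{i=1}^n w_i^*k(x_i,x_{n+1})$ is fine (and in fact the latter is immediate from Remark~2.4(3), since $I(g)(x_{n+1})=\sum_i w_i^* g(x_i)$ for any $g$).
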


\subsection{Min-max linear prediction with constraint}
In this section, we consider the optimization (\ref{Koptimization}) under the
constraint \ben \sum_{i=1}^nw_i\p_k(x_i)=\p_k(x_{n+1}),\quad
k=1,\hdots, q, \label{wconstraint} \een where $\p_1,\hdots,\p_q\in
\Rb^{\{x_1,\hdots, x_{n+1}\}}$ are given.
\\
Solve the minimization (\ref{Koptimization}) under the constraint  (\ref{wconstraint})
is equivalent to solve the system
%\ben
%&&\sum_{j=1}^nk(x_i,x_j)w_j+\sum_{k=1}^q\lambda_k\p_k(x_i)=k(x_i,x_{n+1}),\quad i=1,\hdots, n,\\ \nonumber
%&&\sum_{j=1}^nw_j\p_k(x_j)=\p_k(x_{n+1}),\quad k=1,\hdots, q,
%
%\een
\ben
\begin{cases} \sum_{j=1}^nk(x_i,x_j)w_j+\sum_{k=1}^q\lambda_k\p_k(x_i)=k(x_i,x_{n+1}),\quad i=1,\hdots, n, \\ \sum_{j=1}^nw_j\p_k(x_j)=\p_k(x_{n+1}),\quad k=1,\hdots, q, \label{wlambda}\end{cases}
\een
where $\lambda_1$, $\hdots$, $\lambda_q$ are the Lagrange multiplier.
The solution is unique if the homogeneous system
\be
&&\sum_{j=1}^nk(x_i,x_j)w_j+\sum_{k=1}^q\lambda_kp_k(x_i)=0,\quad i=1,\hdots, n,\\
&&\sum_{j=1}^nw_jp_k(x_j)=0,\quad k=1,\hdots, q, \ee has a unique
solution $w_1=\hdots=w_n=0$, $\lambda_1=\hdots=\lambda_q=0$. This is
equivalent to say that the columns
$(p_k(x_1),\hdots,p_k(x_n))^\top$, $k=1,\hdots, q$, are linearly
independent and that $\K$ is conditionally positive w.r.t. $\p_1$,
$\hdots$, $\p_q$, i.e. the system \be
\sum_{j=1}^n\sum_{j=1}^nk(x_i,x_j)w_jw_i=0,\quad
\sum_{i=1}^nw_ip_k(x_i)=0,\quad k=1,\hdots,q, \ee has a unique
solution $w_1=\hdots=w_n=0$. Observe that this is true if $\K$ is definite positive, but it is not necessary.
\\
Let $w_1^*$, $\hdots$, $w_n^*$, $\lambda_1^*$, $\hdots$,
$\lambda_q^*$ be the solution of the system (\ref{wlambda}). Then
the optimal prediction under the constraint (\ref{wconstraint}) is
\ben f^*(x_{n+1})=\sum_{i=1}^nw_i^*f(x_i).
\label{optimalpredictionunderconstraint} \een
%%#############################################################################################################################
\subsection*{Constraint's parametrization}
Now, let $\z^{(1)}=(z_1^{(1)},\hdots,z_n^{(1)})^\top$ be a particular solution of the system (\ref{wconstraint}) and $\z_1=(z_{11},\hdots,z_{n1})^\top$, $\hdots$, $\z_{n-q}=(z_{1n-q},\hdots,z_{nn-q})^\top$,
$n-q$ independent solutions of the corresponding homogeneous system. Then the general solution
of the system (\ref{wconstraint}) has the form
\be
\w=\z^{(1)}+\sum_{l=1}^{n-q}\tilde{w}_l\z_l.
\ee
Let us consider the basis
\ben
&&\B=[\p_1,\hdots,\p_q,\sum_{i=1}^nz_{i1}\k_i,\hdots,\sum_{i=1}^nz_{in-q}\k_i,\b_{n+1}]\label{pzkbasis}\\
&&=:[\b_1,\hdots,\b_{n+1}]\nonumber
\een
such that the expansion of $\f$ in the basis $\B$ is given by  
\be 
\f=\sum_{l=1}^q\theta_l\f\p_l+\sum_{l=1}^{n-q}\theta_{q+l}\f\b_{q+l}+\theta_{n+1}\f\b_{n+1},
\ee 
with $\theta_{n+1}=(-(\z^{(1)})^\top,1)$. 

The unknows are the rows $\theta_1$, $\hdots$, $\theta_{n}$, and the column 
$\b_{n+1}$. They are solution of the system
\ben 
&&\delta(i=j)=\sum_{l=1}^qp_l(x_i)\theta_{lj}+
\sum_{l=1}^{n-q}b_{q+l}(x_i)\theta_{(q+l)j}+b_{n+1}(x_i)\theta_{(n+1)j}\nonumber\\ 
&&i,j=1,\hdots, n+1.\label{systemforconstraintbasis}
\een 
\\
The interpolation of any function $g$ at the set
$\{x_1,\hdots,x_{n}\}$ using $span(\b_1,\hdots,\b_n)$ is given by
the map \be I(g)=\sum_{j=1}^n\beta_j\b_j, \ee with $\beta$ is the
solution of the system \be \sum_{j=1}^n\beta_j\b_j(x_i)=g(x_i),\quad
i=1,\hdots, n. \ee Using similar arguments as in proposition
\ref{interpolationerror}, we can deduce    the following result.
\begin{prop} The value
	\be
	\sum_{j=1}^nz_j^{(1)}k(x_j,x_{n+1})+I(f-\sum_{j=1}^nz_j^{(1)}\k_j)(x_{n+1})
	\ee coincides with $f^*(x_{n+1})$ given by
	(\ref{optimalpredictionunderconstraint}). In addition, the error is
	equal to \be
	&&f(x_{n+1})-f^*(x_{n+1})=f(x_{n+1})-\sum_{j=1}^nz_j^{(1)}k(x_j,x_{n+1})-
	I(f-\sum_{j=1}^nz_j^{(1)}\k_j)(x_{n+1})\\
	&&=[\b_{n+1}^{(-1)}(f-\sum_{j=1}^nz_j^{(1)}\k_j)][b_{n+1}(x_{n+1})-I(\b_{n+1})(x_{n+1})].
	\ee
\end{prop}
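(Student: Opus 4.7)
My strategy is to mirror the proof of Proposition~\ref{interpolationerror}, replacing the basis $\K$ by $\B$ and the function $\f$ by the residual $\f-\phi$, where $\phi := \sum_{j=1}^{n} z_j^{(1)} \k_j$ is the function on $\{x_1,\hdots,x_{n+1}\}$ whose value at $x_i$ is $\sum_j z_j^{(1)} k(x_j,x_i)$. The key observation is that the decomposition~(\ref{systemforconstraintbasis}) provides a basis expansion in which $I$ acts as the identity on the first $n$ basis vectors, so the interpolation error is driven entirely by the last coordinate $\theta_{n+1}\cdot = (-(\z^{(1)})^\top, 1)\cdot$.

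First, for any $\g \in \Rb^{\{x_1,\hdots,x_{n+1}\}}$, the expansion reads
\[
\g = \sum_{l=1}^{q}(\theta_l \g)\p_l + \sum_{l=1}^{n-q}(\theta_{q+l}\g)\b_{q+l} + (\theta_{n+1}\g)\b_{n+1}.
\]
Applying $I$ and using its linearity together with the exactness properties $I(\p_l) = \p_l$ and $I(\b_{q+l}) = \b_{q+l}$ (since each of these vectors lies in the interpolation span $\b_1,\hdots,\b_n$), the first $n$ terms are unchanged, so
\[
\g - I(\g) = (\theta_{n+1}\g)\bigl(\b_{n+1} - I(\b_{n+1})\bigr).
\]
Specializing to $\g = \f - \phi$, evaluating at $x_{n+1}$, and using $\theta_{n+1}(\f-\phi) = \b_{n+1}^{(-1)}(\f-\phi)$ together with $\phi(x_{n+1}) = \sum_j z_j^{(1)} k(x_j, x_{n+1})$, immediately yields the second equality of the statement.

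Second, for the identification with $f^*(x_{n+1})$, I would use the parametrization $\w^* = \z^{(1)} + \sum_{l=1}^{n-q}\tilde w_l^*\z_l$ coming from the constraint, which splits $\sum_i w_i^* f(x_i) = \sum_i z_i^{(1)} f(x_i) + \sum_l \tilde w_l^*\bigl(\sum_i z_{il} f(x_i)\bigr)$. I would then recognise the optimal $(\tilde w_l^*)$ as being determined by the same normal equations as the interpolation coefficients of $\f-\phi$ on the sub-basis $\{\b_{q+1},\hdots,\b_n\}$, while the coefficients of $\p_1,\hdots,\p_q$ in $I(\f-\phi)$ play the role of Lagrange multipliers. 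This step parallels exactly the unconstrained identification $f^*(x_{n+1}) = I(f)(x_{n+1})$ that underlies Remark~3 and the preceding Corollary.

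The main obstacle is this last identification: carefully matching the Lagrange system~(\ref{wlambda}) for $\w^*$ with the normal equations defining $I(\f - \phi)$ in the basis $\B$. Concretely, one must verify the duality whereby the Lagrange multiplier $\lambda_k^*$ reappears as the coefficient of $\p_k$ in the interpolant of $\f-\phi$; this hinges on the Gram-matrix structure of $\b_{q+1},\hdots,\b_n$ through the kernel $k(\cdot,\cdot)$ and on the orthogonality $\sum_i z_{il}\p_k(x_i) = 0$ built into the $\z_l$. Once this bookkeeping is in place, the first equality follows and the second equality is automatic from the basis-expansion identity above.
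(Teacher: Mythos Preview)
Your proposal is correct and follows exactly the route the paper intends: the paper's own justification is simply the sentence ``Using similar arguments as in Proposition~\ref{interpolationerror}, we can deduce the following result,'' i.e.\ expand in the basis $\B$, use exactness of $I$ on $\b_1,\hdots,\b_n$, and read off the error as the $(n{+}1)$-st coordinate times $\b_{n+1}-I(\b_{n+1})$. Your treatment of the identification with $f^*(x_{n+1})$ via the parametrization $\w^*=\z^{(1)}+\sum_l\tilde w_l^*\z_l$ and the matching of the Lagrange system~(\ref{wlambda}) with the interpolation normal equations is more explicit than anything the paper supplies, but it is the natural way to fill in that step and is consistent with the surrounding discussion (in particular with the paragraph on the constraint's effect on the kernel that follows the proposition).
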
 

If $\K$ is invertible and $\p_l=\k_l$ with $l=1,\hdots, n$, then $\theta_l=\k_l^{(-1)}$
for $l=1,\hdots,n$, and the basis (\ref{systemforconstraintbasis}) 
is given by $\b_l=\k_l$ with $l=1,\hdots, n$, and 
$\b_{n+1}=\frac{\k_{n+1}}{k(x_{n+1},x_{n+1})-I(\k_{n+1})(x_{n+1})}$.  
%%####################################################################################################################
\subsection*{Constraint's effect on the kernel}
From the notations above the general solution of the system
(\ref{wconstraint}) has the form \be
\w=\z^{(1)}+\sum_{l=1}^{n-q}\tilde{w}_l\z_l. \ee 
As a consequence the quadratic
form \be \|\delta_{x_{n+1}}-\sum_{i=1}^nw_i\delta_{x_i}\|_{\K}^2=
\|\mu_{n+1-q}-\sum_{l=1}^{n-q}\tilde{w}_l\mu_l\|_{\tilde{\K}}^2, \ee
with $\mu_1=\sum_{i=1}^nz_{i1}\delta_{x_i}$, $\hdots$,
$\mu_{n-q}=\sum_{i=1}^nz_{in-q}\delta_{x_i}$,
$\mu_{n+1-q}=\delta_{x_{n+1}}-\sum_{i=1}^nz_i^{(1)}\delta_{x_i}$,
and the entries of the $(n+1-q)\times (n+1-q)$ kernel $\tilde{\K}$
are given by \be \tilde{k}(l_1,l_2)=(\mu_{l_1},\mu_{l_2})_{\K},\quad
l_1,l_2=1,\hdots, n+1-q. \ee Observe that $\tilde{\K}$ is positive
definite if and only if the columns
$(p_k(x_1),\hdots,p_k(x_n))^\top$, $k=1,\hdots, q$, are linearly
independent and   $\K$ is conditionally positive w.r.t. $\p_1$,
$\hdots$, $\p_q$.
\\
It follows that
\be
&&\sup\{|f(x_{n+1})-\sum_{i=1}^nw_if(x_i)|^2:\quad \tilde{f}^\top\tilde{\K}^{-1}\tilde{f}\leq 1\}\\
&&=\|\mu_{n+1-q}-\sum_{l=1}^{n-q}\tilde{w}_l\mu_l\|_{\tilde{\K}}^2,
\ee
where $\tilde{f}=(\tilde{f}(1),\hdots, \tilde{f}(n+1-q))^\top\in \Rb^{n+1-q}$ are defined by
\be
&&\tilde{f}(1)=\sum_{i=1}^nz_{i1}f(x_i),\hdots,\tilde{f}(n-q)=\sum_{i=1}^nz_{in-q}f(x_i),\\
&&\tilde{f}(n+1-q)=f(x_{n+1})-\sum_{i=1}^nz_i^{(1)}f(x_i).
\ee
The map $f\in \Rb^{\{x_1,\hdots,x_{n+1}\}}\to \tilde{f}^\top\tilde{\K}^{-1}\tilde{f}$
is a semi kernel having the null space spanned by $\p_1$, $\hdots$, $\p_q$.
\\
That being the case, the optimal weights $\tilde{w}^*$ are given by \be
\tilde{w}^*=\arg\min\{\|\mu_{n+1-q}-\sum_{l=1}^{n-q}\tilde{w}_l\mu_l\|_{\tilde{\K}}^2:\quad
\tilde{w}_1,\hdots, \tilde{w}_{n-q}\in\Rb\}, \ee and then predict
$f(x_{n+1})$ is equal to \be
\sum_{i=1}^nz_i^{(1)}f(x_i)+\sum_{l=1}^{n-q}\tilde{w}_l^*(\sum_{i=1}^nz_{il}f(x_i)).
\ee The latter predictor coincides with
(\ref{optimalpredictionunderconstraint}). Moreover, the spline \be
S(\tilde{f})=\arg\min_{\tilde{f}(n+1-q)}\{\tilde{f}^\top\tilde{K}^{-1}\tilde{f}:\quad
\tilde{f}(1),\hdots,
\tilde{f}(n-q)\,\mbox{are fixed}\} \ee is such that
\be S(\tilde{f})(n+1-q)=f^*(x_{n+1})-\sum_{i=1}^nz_i^{(1)}f(x_i) \ee
with $f^*(x_{n+1})$ is the optimal prediction under the constraint
(\ref{optimalpredictionunderconstraint}).
\\
From the expansion of $\f$
in the basis $\B=[\b_1,\hdots,\b_{n+1}]$ (\ref{pzkbasis}), we can conclude the following result.

\begin{prop}
	If the weights $\w$ satisfy the constraint (\ref{wconstraint}), then
	\be
	|f(x_{n+1})-\sum_{i=1}^nw_if(x_i)|^2=|\sum_{l=1}^{n+1-q}\tilde{f}_{l}\{b_{q+l}(x_{n+1})-\sum_{i=1}^nw_ib_{q+l}(x_i)\}|^2.
	\ee
	It follows that
	\be
	&&\sup\{|f(x_{n+1})-\sum_{i=1}^nw_if(x_i)|^2:\quad \tilde{f}^\top\tilde{\K}^{-1}\tilde{f}\leq 1\}\\
	&&=(b_{q+1}(x_{n+1})-\sum_{i=1}^nw_ib_{q+1}(x_i),\hdots,b_{n+1}(x_{n+1})-\sum_{i=1}^nw_ib_{n+1-q}(x_i))\tilde{\K}^{-1}\\
	&&(b_{1+q}(x_{n+1})-\sum_{i=1}^nw_ib_{1+q}(x_i),\hdots,b_{n+1}(x_{n+1})-\sum_{i=1}^nw_ib_{n+1}(x_i))^\top\\
	&&=(-w_1,\hdots,-w_n,1)\R \tilde{K}\R^\top(-w_1,\hdots,-w_n,1)^\top\\
	&&=(-\tilde{w}_1,\hdots,-\tilde{w}_{n-q},1)\tilde{K}(-\tilde{w}_1,\hdots,-\tilde{w}_{n-q},1)^\top\\
	&&=\|\mu_{n+1-q}-\sum_{l=1}^{n-q}\tilde{w}_l\mu_l\|_{\tilde{\K}}^2,
	\ee with the $(n+1)\times (n+1-q)$ matrix \be
	\R=[\b_{q+1},\hdots,\b_{n+1}]. \ee
\end{prop}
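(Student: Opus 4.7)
The plan is to expand $\f$ in the basis $\B=[\p_1,\ldots,\p_q,\b_{q+1},\ldots,\b_{n+1}]$ of (\ref{pzkbasis}), apply the linear form $\nu=\delta_{x_{n+1}}-\sum_{i=1}^n w_i\delta_{x_i}$, and exploit the fact that the constraint (\ref{wconstraint}) is exactly $\nu(\p_k)=0$ for each $k=1,\ldots,q$.

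First I would write $\f=\sum_{k=1}^q \theta_k(\f)\p_k+\sum_{l=1}^{n+1-q}\tilde{f}_l\,\b_{q+l}$, with $\tilde{f}_l$ as defined just before the proposition. Applying $\nu$ annihilates the $\p_k$-terms by the constraint and leaves $\nu(\f)=\sum_{l=1}^{n+1-q}\tilde{f}_l\,\gamma_l$, where $\gamma_l=b_{q+l}(x_{n+1})-\sum_{i=1}^nw_ib_{q+l}(x_i)$; squaring yields the first identity. Since $\nu(\f)=\gamma^\top\tilde{f}$ with $\gamma=\R^\top(-w_1,\ldots,-w_n,1)^\top$, the Cauchy--Schwarz inequality in the $\tilde{\K}$-geometry gives $|\gamma^\top\tilde{f}|^2\le(\gamma^\top\tilde{\K}\gamma)(\tilde{f}^\top\tilde{\K}^{-1}\tilde{f})$, with equality attained when $\tilde{f}\propto\tilde{\K}\gamma$. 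Hence the supremum subject to $\tilde{f}^\top\tilde{\K}^{-1}\tilde{f}\le 1$ equals $\gamma^\top\tilde{\K}\gamma=(-w_1,\ldots,-w_n,1)\R\tilde{\K}\R^\top(-w_1,\ldots,-w_n,1)^\top$, giving the first matrix form.

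To pass to the $\tilde{\w}$-form I would use the constraint's parametrization $\w=\z^{(1)}+\sum_l\tilde{w}_l\z_l$ together with the Gram identity $\tilde{\K}=T^\top\K T$, where $T$ is the $(n+1)\times(n+1-q)$ matrix whose columns are the $\delta$-basis coefficient vectors of $\mu_1,\ldots,\mu_{n+1-q}$; the latter follows directly from $\tilde{k}(l_1,l_2)=(\mu_{l_1},\mu_{l_2})_{\K}$. Combined with the biorthogonality built into the defining system (\ref{systemforconstraintbasis}) of $\B$, which forces $\R^\top(-w_1,\ldots,-w_n,1)^\top=\eta$ with $\eta=(-\tilde{w}_1,\ldots,-\tilde{w}_{n-q},1)^\top$, this produces $\gamma^\top\tilde{\K}\gamma=\eta^\top\tilde{\K}\eta=(-\tilde{w}_1,\ldots,-\tilde{w}_{n-q},1)\tilde{\K}(-\tilde{w}_1,\ldots,-\tilde{w}_{n-q},1)^\top$, which is by the definition of the $\tilde{\K}$-norm equal to $\|\mu_{n+1-q}-\sum_l\tilde{w}_l\mu_l\|^2_{\tilde{\K}}$.

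The main obstacle will be the biorthogonality identity $\R^\top(-w_1,\ldots,-w_n,1)^\top=\eta$, i.e.\ the relations $b_{q+l}(x_{n+1})-\sum_iw_ib_{q+l}(x_i)=-\tilde{w}_l$ for $l\le n-q$ and $b_{n+1}(x_{n+1})-\sum_iw_ib_{n+1}(x_i)=1$. Establishing this rests on the defining equations (\ref{systemforconstraintbasis}) for $\theta_1,\ldots,\theta_n$ and $\b_{n+1}$, which encode the dual relations $\mu_m(\b_{q+l})=\delta_{m,l}$ between the basis $\B$ and the semi-kernel measures $\mu_m$, and these relations must be checked coordinate by coordinate, the $(n+1-q)$-th entry involving $\b_{n+1}$ being the most delicate.
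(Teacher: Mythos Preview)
Your approach is correct and follows precisely the line the paper indicates: the paper gives no detailed proof beyond the sentence ``From the expansion of $\f$ in the basis $\B=[\b_1,\hdots,\b_{n+1}]$ (\ref{pzkbasis}), we can conclude the following result,'' and your argument---expand in $\B$, use the constraint (\ref{wconstraint}) to kill the $\p_k$-components, then compute the supremum via Cauchy--Schwarz in the $\tilde{\K}$-geometry and reduce to the $\tilde{\w}$-variables through the biorthogonality encoded in (\ref{systemforconstraintbasis})---is exactly the intended completion. Note that your Cauchy--Schwarz step yields $\gamma^\top\tilde{\K}\gamma$, which agrees with the third and fourth displayed lines of the proposition and with the parallel computation in the semi-kernel subsection (where the constraint $\u^\top Q\u\le 1$ produces $Q^{-1}$); the $\tilde{\K}^{-1}$ printed in the second displayed line is a typo in the statement, not a flaw in your argument.
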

%###################################################################################################################################
\subsection{Semi-kernel and constraint}
Now, conversely we consider a semi-kernel $Q$ on
$\Rb^{\{x_1,\hdots,x_{n+1}\}}$ with the null space spanned by $q$
functions $p_1$, $\hdots$, $p_q$ and let \be
S(f)=\arg\min\{Q(f,f):\quad f(x_1),\hdots,f(x_n)\mbox{ are fixed}\}
\ee be the spline defined by the semi-norm $Q$, and
$$S(f)(x_{n+1})=\sum_{i=1}^nw_i^*f(x_i).$$ We consider a basis
$\B=[\b_1,\hdots,\b_{n+1}]$ such that $\b_l=\p_l$ with
$l=1,\hdots,q$ and let
$(\theta_1,\hdots,\theta_q,u_1,$ $ \hdots, u_{n+1-q})$ be the coordinates
of $\f$, i.e. \be
\f=\sum_{l=1}^q\theta_l\p_l+\sum_{l=1}^{n+1-q}u_l\b_{q+l}. \ee It
follows that \be
Q(f,f)=\sum_{l_1=1}^{n+1-q}\sum_{l_2=1}^{n+1-q}u_{l_1}u_{l_2}
Q(b_{q+l_1},b_{q+l_2})=\|Q^{1/2}\u\|^2, \ee and the kernel
$Q=:[Q_{l_1,l_2}:\quad l_1,l_2=1,\hdots, n+1-q]$ is invertible. If
the weights $\w$ satisfy the constraint (\ref{wconstraint}), then
\be
|f(x_{n+1})-\sum_{i=1}^nw_if(x_i)|^2=|\sum_{l=1}^{n+1-q}u_l\{b_l(x_{n+1})-\sum_{i=1}^nw_ib_l(x_i)\}|^2,
\ee therefore, \be
&&\sup\{|f(x_{n+1})-\sum_{i=1}^nw_if(x_i)|^2:\quad Q(f,f)\leq 1\}\\
&&=\sup\{|f(x_{n+1})-\sum_{i=1}^nw_if(x_i)|^2:\quad \u^\top Q\u\leq 1\}\\
&&=(b_{q+1}(x_{n+1})-\sum_{i=1}^nw_ib_{q+1}(x_i),\hdots,b_{n+1}(x_{n+1})-\sum_{i=1}^nw_ib_{n+1-q}(x_i))Q^{-1}\\
&&(b_{1+q}(x_{n+1})-\sum_{i=1}^nw_ib_{1+q}(x_i),\hdots,b_{n+1}(x_{n+1})-\sum_{i=1}^nw_ib_{n+1}(x_i))^\top\\
&&=(-w_1,\hdots,-w_n,1)\R Q^{-1}\R^\top(-w_1,\hdots,-w_n,1)^\top,
\ee where the $(n+1)\times (n+1-q)$ matrix \be
\R=[\b_{q+1},\hdots,\b_{n+1}]. \ee
%##################################################################################################################################
\section{Stochastic approach}
The statistical counterpart to the kernel interpolation is known as
kriging (see e.g. \cite{Scheuerer}). It is
based on the modeling assumption that
$(f(x_1),\hdots, f(x_n),f(x_{n+1}))$ is a realization of random
vector $Y_{x_1}$, $\hdots$, $Y_{x_{n+1}}$ over the same probability space $(\Omega,\mathcal{F},\P)$.
To predict $Y_{x_{n+1}}$ known $Y_{x_1}$, $\hdots$, $Y_{x_n}$  we need the 
mean and the covariance matrix of the random vector $(Y_{x_1},\hdots, Y_{x_{n+1}})$.
\\
We assume that the mean $(m(x_1),\hdots,m(x_{n+1}))$ (also called the trend) and
the covariance function
\be
k(x_i,x_j)=cov(Y_{x_i},Y_{x_j})
\ee
of the random vector $(Y_{x_1},\hdots, Y_{x_{n+1}})$ exist.
\\
If $Y_{x_1}$, $\hdots$, $Y_{x_n}$ are assumed to be known, then the
best linear unbiased predictor (BLUP) of $Y_{x_{n+1}}$
is given by \be
\hat{Y}_{x_{n+1}}=\sum_{i=1}^nw_i^*Y_{x_i}, \ee where the weights
$w_i^*$ are the solution of the following optimization problem \ben
\min\{var(Y_{x_{n+1}}-\sum_{i=1}^nw_iY_{x_i}):\quad w_1,\hdots,
w_n\in\Rb, \sum_{i=1}^nw_im(x_i)=m(x_{n+1})\}. \label{BLUP} \een If
the mean function $m$ is modeled as \be
m(x_i)=\sum_{k=1}^q\beta_kp_k(x_i):\quad i=1,\hdots, n+1, \ee and if
we consider the weights such that \be
\sum_{i=1}^nw_ip_l(x_i)=p_l(x_{n+1}),\quad l=1,\hdots, q, \ee then
the optimal predictor \be \hat{f}(x_{n+1})=\sum_{i=1}^nw_i^*f(x_i)
\ee of $f(x_{n+1})$ in stochastic sense coincides with the
interpolation (\ref{optimalpredictionunderconstraint}).

\section{Three kernel selection criteria}
Kernel interpolation and prediction approaches are based on the
knowledge of a symmetric positive definite matrix $\K$ and the trend
$\p_1$, $\hdots$, $\p_q$. To apply  kernel interpolation it amounts
to the assumption that one knows the degree of smoothness of the
function $f$. In the context of partial differential equations, the
function $f$ belongs to some Sobolev space. In stochastic approach
the covariance matrix and the trend are chosen using the maximum
likelihood method or the Bayesian method.
\\
Here we propose three natural criteria to  compare two kernels $\K^{(1)}$ and $\K^{(2)}$.
Known $f(x_1)$, $\hdots$, $f(x_r)$,
we predict $f(x_{r+1})$ using the kernel $[k^{(l)}(i,j):\quad i,j=1,\hdots,r]$, and we obtain
the predictor $\hat{f}^{(l)}(x_{r+1})$, with $l=1,2$, and $r=2$, $\hdots$, $n-1$.
%\end{document}
We propose the following  three criteria to measure the performance
of the Kernel $\K^{(l)}$:

1) $MSPE(l)=:\frac{\sum_{j=1}^{n-1}|f(x_{j+1})-\hat{f}^{(l)}(x_{j+1})|^2}{n-1}$.
We say that $\K^{(1)}$ is better than $\K^{(2)}$
w.r.t. the MSPE criterion
if
\be
MSPE(1)<MSPE(2).
\ee

2) $MAXPE(l)=:\max\{|f(x_{j+1})-\hat{f}^{(l)}(x_{j+1})|:\quad j=1,\hdots,n-1\}$.
We say that $\K^{(1)}$ is better than $\K^{(2)}$ w.r.t. the MAXPE criterion
if
\be
MAXPE(1)<MAXPE(2).
\ee

3) We say that $\K^{(1)}$ is statistically better than $\K^{(2)}$ if
\be
\frac{\sum_{j=1}^{n-1}{\bf 1}_{[|f(x_{j+1})-\hat{f}^{(1)}(x_{j+1})| <
		|f(x_{j+1})-\hat{f}^{2)}(x_{j+1})|]}}{n-1} >1/2.
\ee
These criteria were also used in \cite{DEM}.

%\end{document}
\section{Application}
In the climate change problem we are interested in the mean temperature 
$f(t)$ at the time $t$. 
The data are the years taken into account $t_1< \hdots< t_{n+1}$ and the mean 
temperature $f(t_1)$, $\hdots$, $f(t_n)$, and we are interested in the prediction of $f(t_{n+1})$. 
We recall that 
\be
\arg\min\{\int_{t_1}^{t_{n+1}}|g''(t)|^2dt:\quad g(t_1)=f(t_1),\hdots, g(t_{n+1})=f(t_{n+1})\quad\mbox{are fixed}\}
\ee 
is the natural $C^2$ cubic spline $s$ which interpolates the points $(t_i,f(t_i)), i=1, \ldots, n+1$.
See \cite{schoenberg,reinsch}. We assume that $f(t_1)$, $\hdots$, $f(t_{n+1})$ are the values of 
a natural $C^2$ cubic spline. We are going to predict 
$f(t_{n+1})$ using three kernels, and we need some notations. 

\subsection{Kernel and semikernels using cubic splines}
Let $S = S_3(t_1, \ldots, t_{n+1})$ be the set of $C^2$ cubic
splines having the knots $t_1 <\cdots < t_{n+1}$. Note that every
element $s\in S$ is a $C^2$ map on $[t_1, t_{n+1}]$ and is a
polynomial of degree three on each interval $[t_i,t_{i+1})$ for
$i=1$,\ldots, $n$.

More precisely, let
\be
&&p_1 =s(t_1), \ldots, p_{n+1} =s(t_{n+1}),\quad q_1 =s'(t_1), \ldots, q_{n+1} =s'(t_{n+1}),\\
&&u_1 =s''(t_1), \ldots, u_{n+1} =s''(t_{n+1}),\quad v_1 =
s'''(t_1+), \ldots, v_{n}=s'''(t_{n}+) \ee be respectively the
values of $s$ and its derivatives up to order three at the knots. We
have for every $i=1,\ldots, n$, \be
s(t)=p_i+q_i(t-t_i)+(t-t_i)^2u_i/2+(t-t_i)^3v_i/6,\quad t\in [t_i,
t_{i+1}). \ee

The following constraint for $h_i=t_{i+1}-t_i$, $i=1, \ldots, n$
ensures the hypothesis that $s$ is $C^2$: \ben
&&p_i+q_ih_i+u_ih_i^2/2+v_ih_i^3/6=p_{i+1},\quad \label{c0}\\
&&q_i+u_ih_i+v_ih_i^2/2=q_{i+1},\quad \label{c1}\\
&&v_i=s^{(3)}(t_i)=(u_{i+1}-u_i)/h_i.\quad \label{c2}
\een
\\
It is well known (see \cite{deBoor}) that $S$ has the dimension $n+3$, and the set 
of natural spline $S_{nat}$ has the dimension $n+1$. 
Hence an element $s\in S$ (respectively $s\in S_{nat}$) is completely defined by 
$n+3$ (respectively $n+1$ parameters) independent parameters.   

Now we need to parametrize the set $S$ in order to define properly an element  
$s\in S$. A parametrization of $S$ is a one-to-one linear map 
\be 
\Theta: s\in S\to \theta\in \Rb^{n+3}.
\ee 
Defining a parametrization $\Theta$ 
is equivalent to the existence of the basis 
$\B=(\b_1, \ldots, \b_{n+3})$ of $S$ such that, for all $s\in S$,
\be 
s=\sum_{i=1}^{n+3}\theta_i \b_i =\B\theta. 
\ee 
The parametrization   
$\Theta_{002}=(p_1, p_2, u_1, \ldots, u_{n+1})$ defines the basis 
$\B_{002}=(\b_1^{002}, \ldots, \b_{n+3}^{002})$.  
The subscript notation 002 is justified by the fact that 
\be &&p_1 =s(t_1) =s^{(0)}(t_1), p_2 =s(t_2) =s^{(0)}(t_2),\\ &&u_1 =s''(t_1) =s^{(2)}(t_1), \ldots, u_{n+1} =s''(t_{n+1}) =s^{(2)}(t_{n+1}). 
\ee 
See \cite{DP1,DP2,DP3} for more details. 

It follows for $s\in S_{nat}$ that   
\be 
s=p_1\b_1^{002}+p_2\b_2^{002}+\sum_{i=2}^{n}u_i\b_{2+i}^{002},
\ee 
and then $\s=(s(t_1),\hdots,s(t_{n+1})^\top$ is given by 
\be 
\s=[\b_1^{002}(\t),\b_2^{002}(\t)](p_1,p_2)^\top+\sum_{i=2}^{n}\R(u_2,\hdots,u_{n})^\top,
\ee 
Here the column $\b_i^{002}(\t)=(b_i^{002}(t_1),\hdots,b_i^{002}(t_{n+1}))^\top$, with $i=4,\hdots,n+2$, and the $n+1\times n+1$ matrix  
\be 
\R=[\b_{4}^{002}(\t),\hdots,\b_{n+2}^{002}(\t)]. 
\ee 
Observe that $span(\b_1^{002},\b_2^{002})=span({\bf 1},{\bf t})$ with 
the column ${\bf 1}=(1,\hdots,1)^\top$, ${\bf t}=(t_1,\hdots,t_{n+1})^\top$.

We can show that 
\ben 
&&\int_{t_1}^{t_{n+1}}|s''(t)|^2dt=\nonumber\\ 
&&\sum_{i=1}^n\int_{t_i}^{t_{i+1}}|u_i+t(u_{i+1}-u_i)/h_i|^2dt\nonumber\\
&=&\sum_{i=1}^n(u_i^2+u_iu_{i+1}+u_{i+1}^2)h_i/3\nonumber\\
&=&(u_2,\hdots,u_n)\Q(u_2,\hdots,u_n)^\top,\label{Q}   
\een
with $\Q$ is a known $n-1\times n-1$ invertible matrix see \cite{DP1}. 
We also recall that 
\be 
(u_2,\hdots, u_n)^\top=\U(p_1,\hdots,p_{n+1})^\top,
\ee 
with $\U$ is a known $n-1\times n+1$ matrix see \cite{DP1}. Therefore  
\ben 
&&(u_2,\hdots, u_n)\Q(u_2,\hdots,u_n)^\top=(p_1,\hdots,p_{n+1})\U^\top\Q\U(p_1,\hdots,p_{n+1})^\top\label{Q}\\
&&=:(p_1,\hdots,p_{n+1})\P(p_1,\hdots,p_{n+1})^\top\label{P}\\
\een 

Now we propose the following predictors for $f(t_{n+1})$. 

0) We assume that $\s$ is Gaussian centred with the covariance matrix $\K^{(0)}=(\Q^{(0)})^{-1}$ with $\Q^{(0)}$ is defined by  
\be 
\int_{t_1}^{t_{n+1}}|s(t)|^2dt=\s^\top\Q^{(0)}\s.
\ee

1) We consider the spline 
\ben 
S(f)=\arg\min\{(f(t_1),\hdots,f(t_{n+1}))\P(f(t_1),\hdots,f(t_{n+1}))^\top:\quad f(t_1),\hdots, f(t_n)
\quad \mbox{are fixed}\},\label{Pspline}
\een 
defined by the kernel $\P$ (\ref{P})
and the predictor $f^*(t_{n+1})=S(f)(t_{n+1})$ of $f(t_{n+1})$. We assume that $\s$ is Gaussian
with the mean $p_1\b_1^{002}(\t)+p_2\b_2^{002}(\t)=\beta_1{\bf 1}+\beta_1{\bf t}$ and the covariance matrix $\K^{(1)}=\R\Q^{-1}\R^\top$ with the kernel $\Q$ is given by (\ref{Q}). The predictor $\hat{f}^{(1)}(t_{n+1})$ of $f(t_{n+1})$ (\ref{optimalpredictionunderconstraint}) using the kernel $\K^{(1)}$ coincides with 
$S(f)(t_{n+1})$. 

2) We assume that $\s$ is Gaussian
with the mean $p_1\b_1^{002}(\t)+p_2\b_2^{002}(\t)=\beta_1{\bf 1}+\beta_1{\bf t}$ and the covariance matrix $\K^{(2)}=\R\Q\R^\top$.

Let $\hat{f}^{(i)}(t_{n+1})$ be the predictor of $f(t_{n+1})$ (\ref{optimalpredictionunderconstraint}) using the kernel $\K^{(i)}$ with $i=0,1,2$. Using real data, we are going to compare these three predictors.

\subsection{Real data Application}
As application in the climate change area we are interested in the annual mean temperature observed in France and Morocco from 
1901 to 2015, the data are presented in Figure \ref{data}.
We illustrate the importance of the kernel choice by considering the kernels 
$\K^{(0)}$, $\K^{(1)}$, $\K^{(2)}$. The three kernel selection criteria 
are presented in Table \ref{err}. The mean annual temperature of the year 2015 and 2016 (i.e. $\hat{f}^{(i)}(t_{n})$ and $\hat{f}^{(i)}(t_{n+1})$, $n=114$) are given in Tables (\ref{pr2015}, \ref{pr2016}), as for Figure \ref{data s} it shows the splines of the predictors $\hat{f}^{(0)}$, $\hat{f}^{(1)}$, $\hat{f}^{(2)} $ and the true temperature. The $w_1^*$, $\hdots$, $w_n^*$ of (\ref{optimalpredictionunderconstraint}) for the kernels $\K^{(0)}$, $\K^{(1)}$, $\K^{(2)}$ are presented in Figure \ref{w}.

\begin{figure}[H]
	\caption{Mean annual temperatures in France and Morocco from 1901 to 2015.}
	\label{data}
	\begin{center}
		{\includegraphics[ width=11cm]{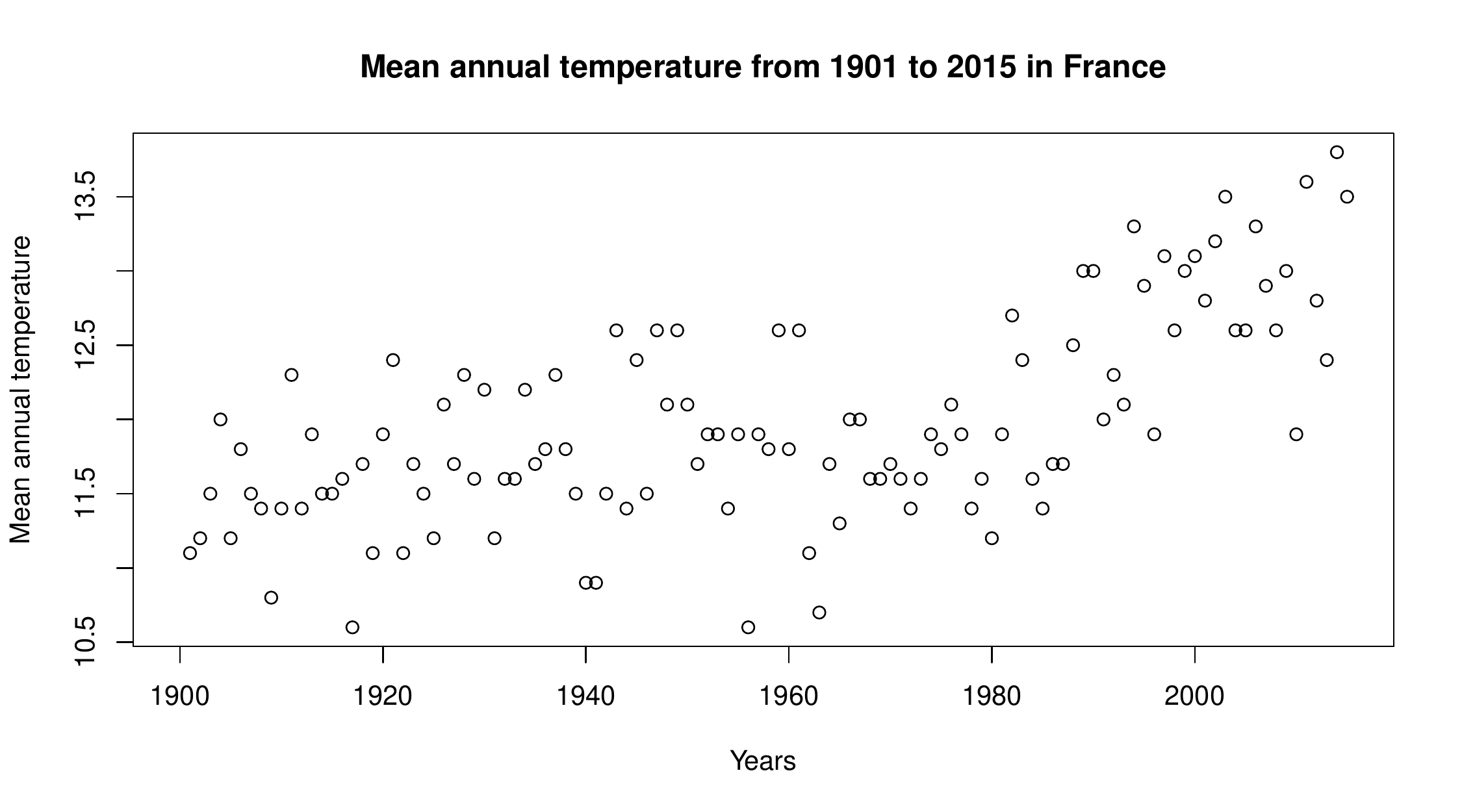}}
		{\includegraphics[ width=11cm]{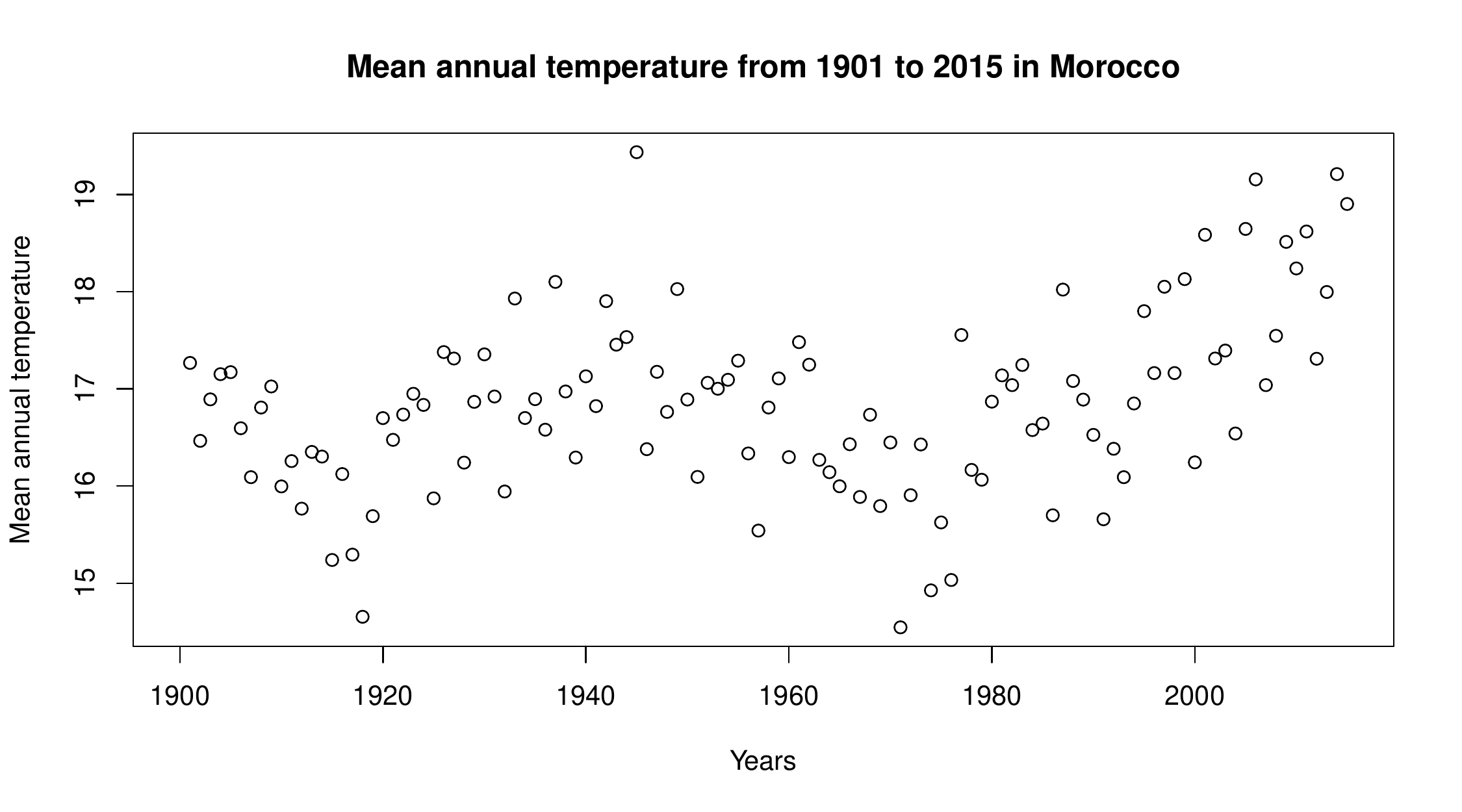}}
	\end{center}
\end{figure}

\begin{table}[H]
	\centering
	\caption{ The three kernel selection criteria for the kernels $\K^{(0)}$,$\K^{(1)}$, $\K^{(2)}$  using Morroco and France data.}
	\label{err}
	\begin{tabular}{|l|l|l|l|l|l|l|}
		\cline{1-7}
		Country &\multicolumn{3}{c|}{\textbf{France}}&\multicolumn{3}{c|}{\textbf{Morocco}}  \\ \cline{1-7}
		Kernel& $\K^{(0)}$ & $\K^{(1)}$& $\K^{(2)}$ & $\K^{(0)}$&$\K^{(1)}$& $\K^{(2)}$ \\ \cline{1-7}
		$MSPE$ &0.3301302&  2.090779 &5.21788 & 0.7727975  & 5.110724& 11.70042\\ \cline{1-7}
		$MAXPE$ &  1.3961  & 3.344106&5.312125 & 2.251341  & 6.171007& 9.438383\\ \cline{1-7}
		Statistically & \multicolumn{3}{c|}{$\K^{(0)}$ with 0.8198198 for $\K^{(1)}$ } & \multicolumn{3}{c|}{$\K^{(0)}$ with 0.8288288 for $\K^{(1)}$ } \\ 
		& \multicolumn{3}{c|}{and 0.8288288 for $\K^{(2)}$} & \multicolumn{3}{c|}{and 0.8468468 for $\K^{(2)}$} \\ \cline{1-7}
	\end{tabular}
	
\end{table}

\begin{figure}[H]
	\caption{The splines of the predictors $\hat{f}^{(0)}$, $\hat{f}^{(1)}$, $\hat{f}^{(2)} $ and the true temperature.}
	\label{data s}
	\begin{center}
		{\includegraphics[ width=12cm]{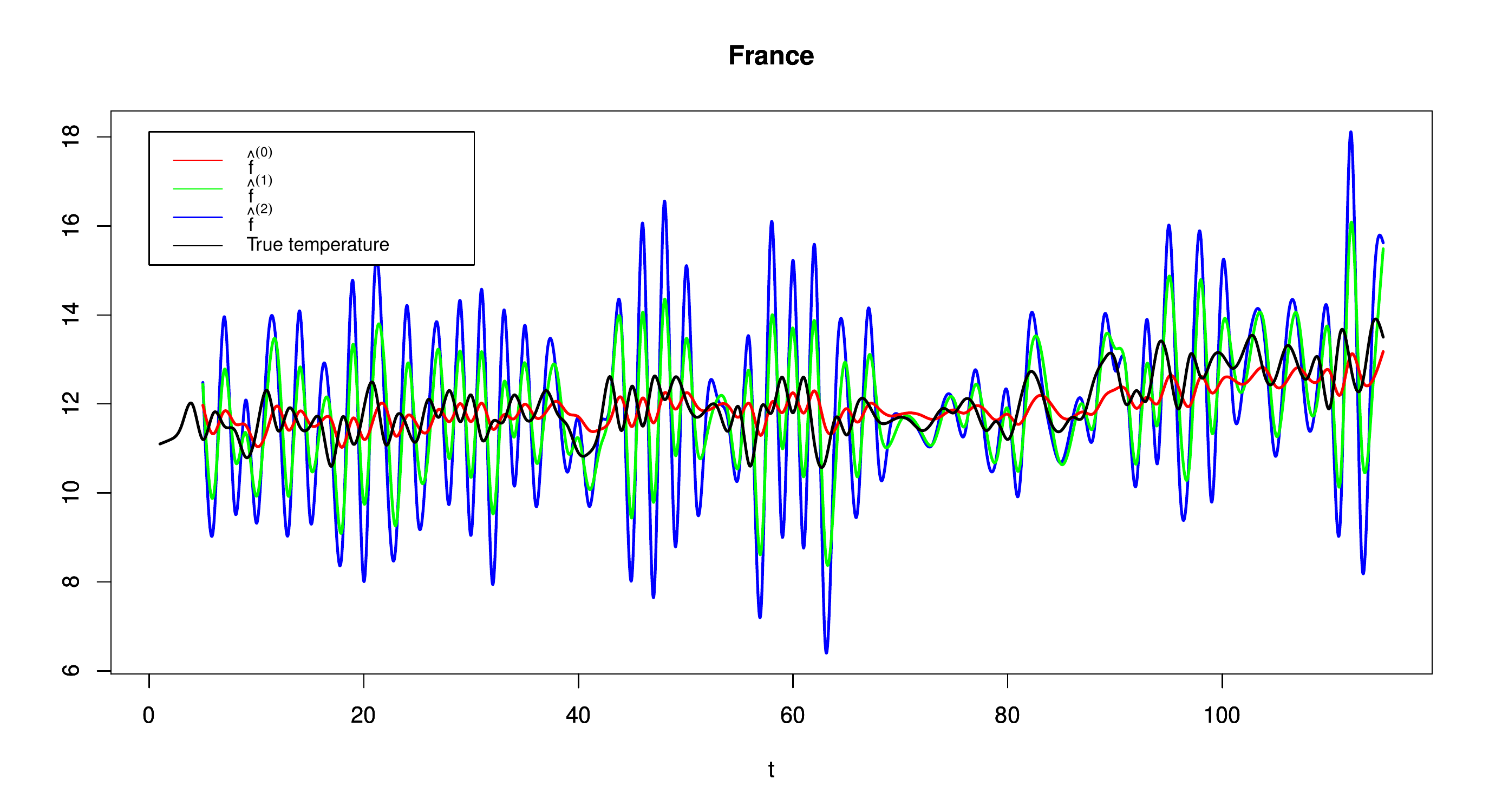}}
		{\includegraphics[ width=12cm]{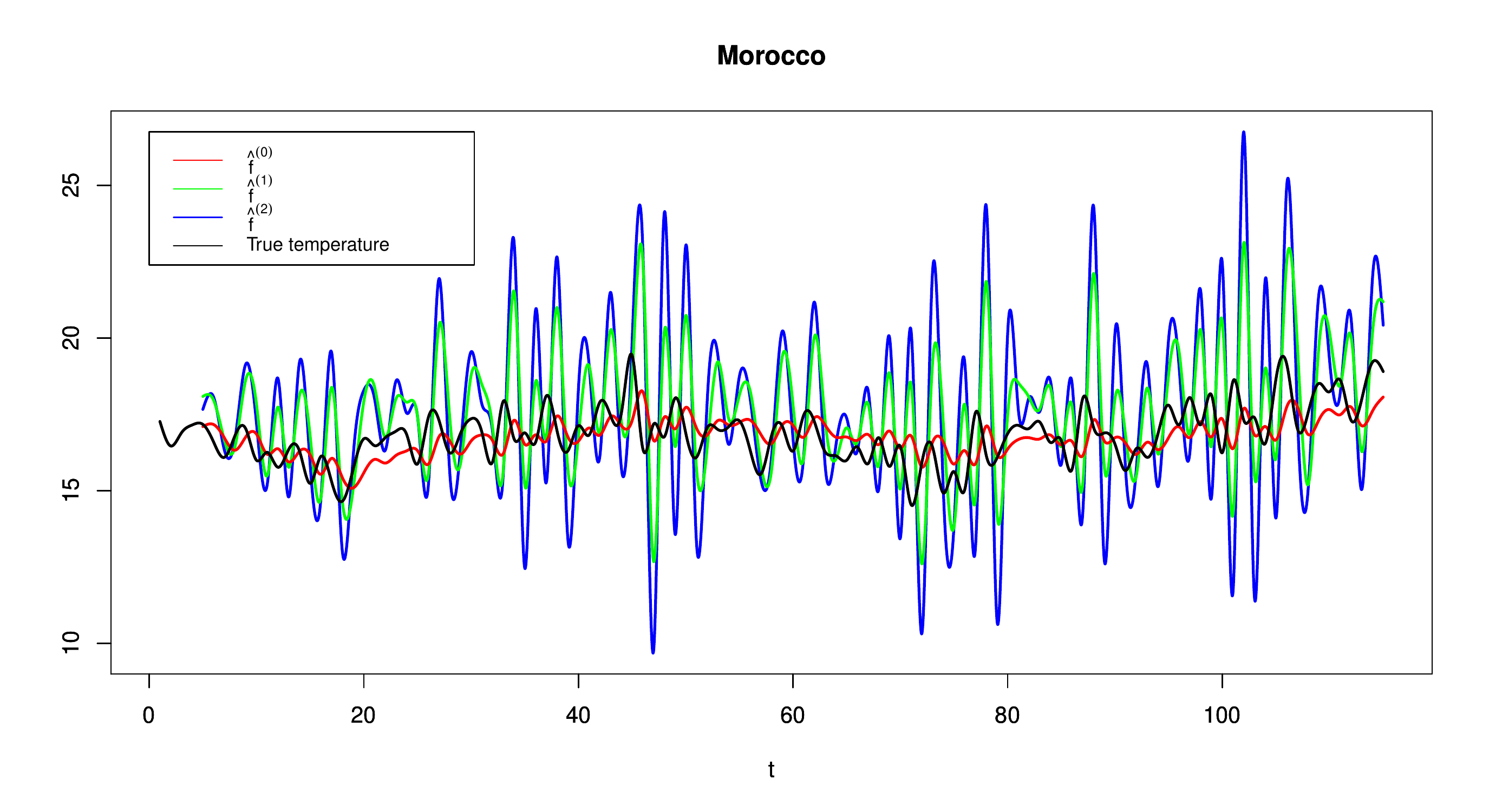}}
	\end{center}
\end{figure}

\begin{table}[H]
	\centering
	\caption{ The predictors  $\hat{f}^{(i)}(t_{n})$, $n=114$ (the mean annual temperature of the year 2015).}
	\label{pr2015}
	\begin{tabular}{|l|l|l|l|l|l|l|}
		\cline{1-7}
		Country &\multicolumn{3}{c|}{\textbf{France}}&\multicolumn{3}{c|}{\textbf{Morocco}}  \\ \cline{1-7}
		Kernel & $\K^{(0)}$ & $\K^{(1)}$& $\K^{(2)}$ & $\K^{(0)}$&$\K^{(1)}$& $\K^{(2)}$ \\ \cline{1-7}
		Prediction &13.17656 &15.48813 &15.61992 &  18.06526 &21.18307 &20.41619\\ \cline{1-7}
		True temperature & \multicolumn{3}{c|}{13.5} & \multicolumn{3}{c|}{18.9008} \\ \cline{1-7}
	\end{tabular}
	
\end{table}

\begin{table}[H]
	\centering
	\caption{ The predictors $\hat{f}^{(i)}(t_{n+1})$, $n=114$ (the mean annual temperature of the year 2016).}
	\label{pr2016}
	\begin{tabular}{|l|l|l|l|l|l|l|}
		\cline{1-7}
		Country &\multicolumn{3}{c|}{\textbf{France}}&\multicolumn{3}{c|}{\textbf{Morocco}}  \\ \cline{1-7}
		Kernel & $\K^{(0)}$ & $\K^{(1)}$& $\K^{(2)}$ & $\K^{(0)}$&$\K^{(1)}$& $\K^{(2)}$ \\ \cline{1-7}
		Prediction &12.91553 &12.54049 &11.40698 & 17.86737 &18.99740 &18.49113\\ \cline{1-7}
	\end{tabular}
	
\end{table}

\begin{remark}
	Table \ref{err} shows that the kernel $\K^{(0)}$ wins against $\K^{(1)}$ and $\K^{(2)}$ with respect to the three kernel selection criteria.
\end{remark}

\begin{figure}[H]
	\caption{The $w_1^*$, $\hdots$, $w_n^*$ of (\ref{optimalpredictionunderconstraint}) for the kernels $\K^{(0)}$, $\K^{(1)}$, $\K^{(2)}$.}
	\label{w}
	\begin{center}
		{\includegraphics[ width=11.5cm]{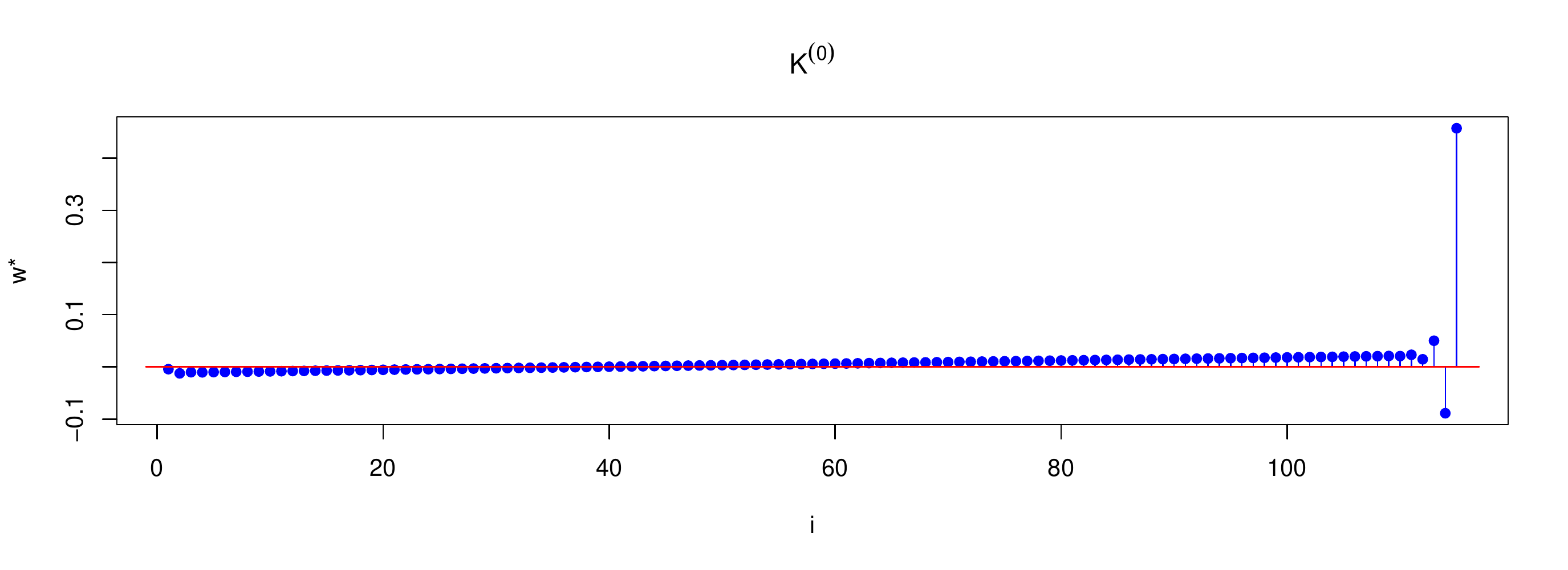}}
		{\includegraphics[ width=11.5cm]{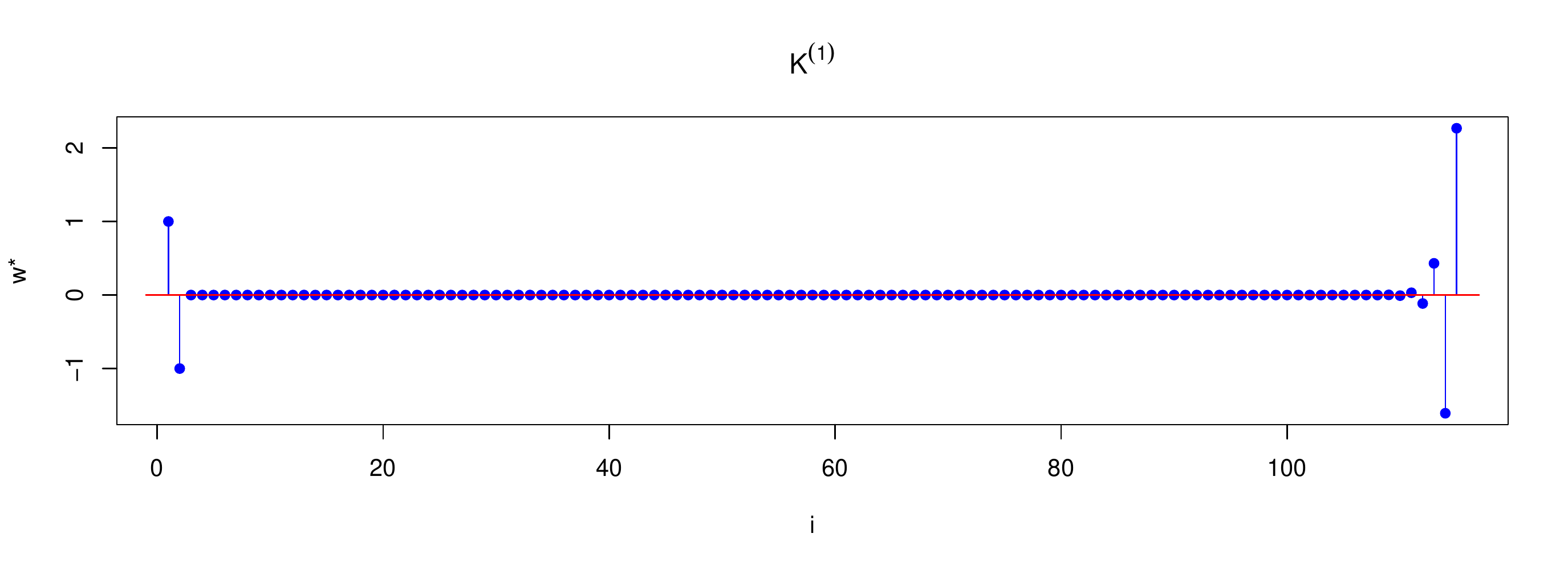}}
		{\includegraphics[ width=11.5cm]{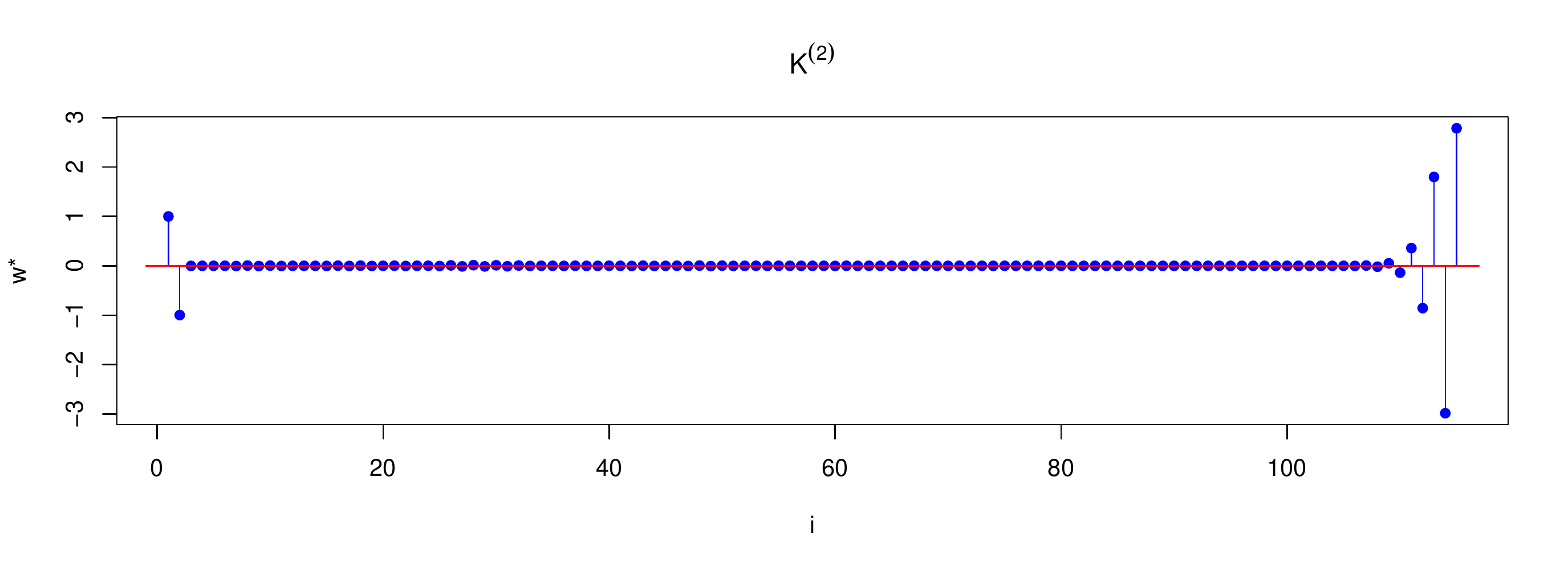}}
	\end{center}
\end{figure}

\subsection{Concluding remarks} 
The numerical results shows the three kernel selection criteria are stable, form Table \ref{err} we have that the best kernel among the three kernels is $\K^{(0)}$ w.r.t. all the three criteria for both France and Morocco data. Moreover, the representation of the splines (Figure \ref{data s}) shows that too.

From Table \ref{err} and Figure \ref{data s} we have that the kernel $\K^{(1)}$ wins against $\K^{(2)}$. 
Considering the second derivative $(u_2,\hdots,u_n)$
as Gaussian with the covariance matrix $Q^{-1}$ is a good stochastic modelization, at least is better than the assumption that $(u_2,\hdots,u_n)$
as Gaussian with the covariance matrix $Q$. 
Equivalently measuring the worst error in the unit ball using the norm 
$\|Q^{1/2}\u\|$ is better than the norm $\|Q^{-1/2}\u\|$.

%\bibliography{mybibfile}

\end{document}